\newtheorem{theorem}{Theorem}[section]
\newtheorem{lemma}[theorem]{Lemma}
\newtheorem{corollary}[theorem]{Corollary}
\newtheorem{proposition}[theorem]{Proposition}
\newenvironment{proof}{\noindent\textsc{Proof: }}{\hspace{\stretch{1}}$\square$\medskip}
\begin{document}


\title
  {Upper bounds on the Witten index for supersymmetric
   lattice models by\\ discrete Morse theory}

\author{Alexander Engstr\"om\\
Department of Mathematics\\
Royal Institute of Technology\\
S-100 44 Stockholm\\
Sweden\\
\\
{\tt alexe@math.kth.se}
}

\date\today

\maketitle


%

\begin{abstract}
The Witten index for certain supersymmetric lattice models
treated by de Boer, van Eerten, Fendley, and Schoutens, can be formulated as a
topological invariant of simplicial complexes arising as
independence complexes of graphs. We prove a general
theorem on independence complexes using discrete Morse theory: 
If $G$ is a graph and $D$ a subset of its vertex set such that
$G\setminus D$ is a forest, then 
$ \sum_i \dim \tilde {H}_i({\tt Ind}(G);\mathbb{Q}) \leq |{\tt Ind}(G[D])| $.
We use the theorem to calculate upper bounds on
the Witten index for several classes of lattices.
These bounds confirm some of the computer calculations by van Eerten on
small lattices.

The cohomological method and the 3-rule of Fendley et al. is a special
case of when $G\setminus D$ lacks edges. We prove a generalized 3-rule
and introduce lattices in arbitrary dimensions satisfying it.
\end{abstract}

\section{Introduction}

This paper is motivated by combinatorial questions that arise in statistical
physics. To deal with the problems we use a discrete version of
Morse theory and algebraic topology. This short 
introduction to certain supersymmetric lattice models follows the work of
de Boer, van Eerten, Fendley, and Schoutens \cite{vE,FSB,FSE}
closely and we refer to them for the big picture. A lattice is
a graph, and vertices can be occupied by certain elementary 
particles called fermions. But two fermions are not allowed to occupy
adjacent vertices. 
The Witten index $W=\textrm{tr}((-1)^Fe^{-\beta H})$ for the
Hamiltonian $H$ turns out to be independent of $\beta$, and
in the limit $\beta\rightarrow 0$ it is 
\[f_0-f_1+f_2-f_3+\cdots, \]
where $f_i$ is the number of ways $i$ fermions can be distributed 
on the lattice. As exemplified in \cite{FSB}, if the lattice
is a cube then $W=1-8+16-8+2=3$. The Witten index is used to
estimate the number of ground states of a system.

There is a beautiful connection between combinatorial topology
and physics
first used by Jonsson \cite{J2} to prove two conjectures
from \cite{FSE} and later explored by Bousquet-M\'elou, 
Linusson, and Nevo \cite{LNM}. For a simplicial complex
with $f_i$ faces of dimension $i-1$ the reduced Euler
characteristic is $-f_0+f_1-f_2+f_3-\cdots,$ which
is $-W$. The simplicial complex of allowed
fermion configurations on a graph is usually called the
independence complex of the graph, and our
main result, Theorem~\ref{theorem:main}, is a tool
for bounding expressions like the reduced Euler
characteristic (and hence the Witten index).
In section~\ref{sec:bounds} we apply our estimation technique
on lattice types for which van Eerten \cite{vE} approximated the 
Witten index using transfer matrices that could
be computer treated. 

In the last section we generalize
the cohomological method and the 3-rule of 
Fendley, Halverson, Huijse, and Schoutens \cite{FHHS, FS, HS}.
We present lattices of any dimension that satisfy
the generalized 3-rule and give good lower bounds on
their number of ground states.

\subsection{Notation}

We will use $2^V$ to denote the 
set of all subsets of $V$.
An \emph{abstract simplicial complex} $\Sigma$ with 
vertex set $\Sigma^0$ is a subset of $2^{\Sigma^0}$
satisfying
$ \sigma\subseteq\tau\in\Sigma\Rightarrow \sigma\in\Sigma. $
We will often patch together simplicial complexes combinatorially
and in that case it is useful to allow $\emptyset\in\Sigma$.
All graphs and simplicial complexes in this paper are finite.
The \emph{face poset} $\mathcal{F}(\Sigma)$ is the set of elements
of a simplicial complex $\Sigma$ partially ordered by inclusion.
Note that if $\Sigma$
is nonempty, then $\emptyset$ is the least element of $\mathcal{F}(\Sigma)$.
We warn the reader that the empty set is usually not included
in the face poset, but it will make life much easier when we
merge posets. 
Given a subset $L$ of $\Sigma^0$, the \emph{induced subcomplex} of
$\Sigma$ on $L$ is $\Sigma[L]=\{\sigma\in\Sigma | \sigma\subseteq L\}$, and
the \emph{link}
 ${\tt lk}_\Sigma(L)$ is the subcomplex $\{ \sigma\in \Sigma \mid
  \sigma\cap L =\emptyset \textrm{ and } \sigma\cup L \in \Sigma \}$  
 of $\Sigma$ with vertex set $\Sigma^0\setminus L$. 
For \emph{induced subgraphs} we use
the same notation as for induced subcomplexes.

\section{Independence complexes and discrete Morse theory}

In this section we review necessary facts regarding
discrete Morse theory and independence complexes, and
prove some useful lemmas and propositions. The
topological objects that we most often consider are
simplicial complexes,
but sometimes well-behaved finite CW-complexes pop up.
For the definition of CW-complexes and basic facts
of combinatorial topology, \cite{B} and \cite{J} are
recommended.
Discrete Morse theory is a method for reducing the number
of cells of a CW-complex without changing its homotopy
type. It was invented by Forman \cite{F1} who used the
concept of discrete Morse functions. In the last years
these functions have mostly been used only implicitly, and instead one
constructs acyclic matchings on Hasse diagrams of
face posets. In chapter 4 of Jonsson's book ``Simplicial
Complexes of Graphs'', \cite{J}, the state of art of
discrete Morse theory is surveyed. 
Our method of applying 
the theory has a lot in common with the philosophy behind 
Bousquet-M\'elou, Linusson, and Nevo's paper \cite{LNM}.

The \emph{Hasse diagram} of a poset $P$ is a directed
graph with vertex set $P$ and an arc $x\rightarrow z$
for each pair $x<z$ such that there does not exist
a $y$ satisfying $x<y<z$. The element $z$ \emph{covers}
$x$ in $P$ if $x\rightarrow z$ in the Hasse diagram.
An \emph{acyclic matching} on
$P$ is a set $\mathcal{C}$ of pairs of elements from
$P$ satisfying three conditions:
\begin{itemize}
\item[(i)] Two elements can only form a pair if one of
  them covers the other one.
\item[(ii)] No element of the poset is in more than 
            one pair of $\mathcal{C}$.
\item[(iii)] If for each pair $x\rightarrow z$
  of $\mathcal{C}$ we change the direction of the
  arcs to $x\leftarrow z$, then the Hasse diagram
  is still acylic.
\end{itemize}
We construct acyclic matchings on 
face posets, and the elements left in a poset
after the removal of all matched cells of an
acyclic matching are called the \emph{critical cells.}
Removing the cells of an acyclic matching $\mathcal{C}$
from a complex $\Sigma$ is a recurring operation and
we will use the sloppy notation $\Sigma\setminus 
\mathcal{C}$ to denoted the critical cells.
In the definition of the face poset of a 
simplicial complex we included the empty set,
if we had not done that, one of the vertices
would be a critical cell. The difference in 
the definitions corresponds
to working with either reduced $(\tilde{H}_\ast)$ or 
unreduced $(H_\ast)$ homology.

The simplical version of the main theorem of 
discrete Morse theory states that if $\Sigma$ 
is a simplicial complex and $\mathcal{C}$ is 
an acyclic matching on $\mathcal{F}(\Sigma)$ then there
is a CW-complex $\Omega$ with $\Sigma\setminus 
\mathcal{C}$ as cells (but with perhaps other
gluing maps) which is homotopy equivalent
to $\Sigma$. If no cell in the acyclic matching is
covered by a critical cell then $\Omega$ is
a simplicial complex and the homotopy equivalence
is a deformation retraction. A homological corollary from this is that 
if we have an acyclic matching $\mathcal{C}$ on $\mathcal{F}(\Sigma)$, then as vector spaces
\begin{equation}\label{eq}
 \bigoplus_i  \tilde {H}_i(\Sigma  ;\mathbb{Q}) \subseteq 
\bigoplus_{\sigma \in \Sigma \setminus \mathcal{C}  } \mathbb{Q}. 
\end{equation}

The following cluster lemma will be used to patch together acyclic matchings.

\begin{lemma}[\cite{J}, Lemma 4.2] \label{lemma:merge}
Let $\Delta$ be a simplicial complex and 
$f:\mathcal{F}(\Delta) \rightarrow P$ a poset
map to some poset $P$. If we have an acyclic
matching on each $f^{-1}(p)$ for $p\in P$,
then their union is an acyclic matching.
\end{lemma}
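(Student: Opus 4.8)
The plan is to verify directly the three defining conditions (i)--(iii) of an acyclic matching for the union $\mathcal{C}=\bigcup_{p\in P}\mathcal{C}_p$, where $\mathcal{C}_p$ denotes the given acyclic matching on the fiber $f^{-1}(p)$. Two structural facts will drive everything. First, because $f$ is a poset map, each fiber $f^{-1}(p)$ is order-convex in $\mathcal{F}(\Delta)$: if $x<w<y$ with $x,y\in f^{-1}(p)$, then $p=f(x)\le f(w)\le f(y)=p$ forces $f(w)=p$, so $w\in f^{-1}(p)$. Second, every matched pair of $\mathcal{C}$ lies inside a single fiber, so its two elements have the same $f$-value. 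I would isolate these two facts at the outset, since each of the three conditions follows from one or both of them.

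Conditions (i) and (ii) should be quick. For (ii), the fibers $f^{-1}(p)$ partition $\mathcal{F}(\Delta)$ and each $\mathcal{C}_p$ pairs only elements of its own fiber, so no element is used twice and the union is a matching in the set-theoretic sense. For (i), order-convexity upgrades a fiberwise covering to a covering in $\mathcal{F}(\Delta)$: if $y$ covers $x$ in $f^{-1}(p)$ but some $w$ satisfied $x<w<y$ in $\mathcal{F}(\Delta)$, then $w$ would lie in $f^{-1}(p)$ as well, contradicting the covering in the fiber. Hence each pair of $\mathcal{C}$ is a genuine covering pair of $\mathcal{F}(\Delta)$.

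The heart of the argument is condition (iii), the preservation of acyclicity, and this is where I expect the only real work. Adopting the orientation convention of the excerpt, in the modified Hasse diagram an unmatched covering arc $x\to z$ points upward, so $f(x)\le f(z)$ because $f$ is a poset map, while a reversed matched arc $z\to x$ points downward and, by the second structural fact, satisfies $f(z)=f(x)$. Thus along \emph{every} arc $a\to b$ of the modified diagram one has $f(a)\le f(b)$. Suppose toward a contradiction that a directed cycle $v_0\to v_1\to\cdots\to v_{m-1}\to v_0$ exists. Reading the cycle yields a chain $f(v_0)\le f(v_1)\le\cdots\le f(v_{m-1})\le f(v_0)$ in $P$, and by antisymmetry all these values coincide with a single element $p$; hence the whole cycle lies in $f^{-1}(p)$.

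It then remains to recognize this cycle as a directed cycle in the modified Hasse diagram of $f^{-1}(p)$ equipped with $\mathcal{C}_p$: each upward arc is a covering relation in $f^{-1}(p)$ by order-convexity and is unmatched there, and each downward arc is a reversed $\mathcal{C}_p$-pair. This contradicts the assumed acyclicity of $\mathcal{C}_p$, completing (iii). The main obstacle is therefore confined to establishing the monotonicity of $f$ along the modified arcs; once that is in hand, any hypothetical cycle collapses into a single fiber and the fiberwise hypothesis finishes the proof.
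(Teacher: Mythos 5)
Your proof is correct. Note, however, that the paper does not prove this lemma at all: it is imported verbatim as Lemma 4.2 of Jonsson's book \cite{J} (the ``cluster lemma''), so there is no in-paper argument to compare against. What you have written is essentially the standard proof of that result, and it is complete. The two structural observations you isolate are exactly the right ones: order-convexity of the fibers of a poset map (which both upgrades fiberwise covering relations to covering relations in $\mathcal{F}(\Delta)$, giving condition (i), and lets you recognize a cycle confined to a fiber as a cycle in that fiber's own modified Hasse diagram), and the weak monotonicity of $f$ along every arc of the modified Hasse diagram (non-strict increase along unmatched upward arcs because $f$ is order-preserving, equality along reversed matched arcs because each matched pair lives in a single fiber). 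Antisymmetry in $P$ then collapses any hypothetical directed cycle into one fiber, where the hypothesis on $\mathcal{C}_p$ yields the contradiction. This is precisely the argument one finds in \cite{J}, so your proposal supplies, correctly, the proof the paper chose to cite rather than reproduce.
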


Our use of Lemma~\ref{lemma:merge} will follow the following pattern. For
a simplicial complex $\Sigma$ choose a subset $D$ of its vertex
set. Then consider the map $f:\mathcal{F}(\Sigma)\rightarrow
\mathcal{F}(\Sigma[D])$ defined by $\sigma \mapsto \sigma\cap D$
and use certain acyclic matchings on $f^{-1}(\tau)=\{\sigma\in\Sigma |
\sigma\cap D=\tau\}\subseteq \mathcal{F}(\Sigma)$ to obtain an
acyclic matching on all of $\mathcal{F}(\Sigma)$.

A subset $I$ of the vertex set of a graph $G$ is \emph{independent}
if there are no two vertices of $I$ that are adjacent in $G$.
The \emph{independence complex} of a graph $G$, ${\tt Ind}(G)$,
is a simplicial complex with the same vertex set as $G$ and
with faces given by the independent sets of $G$. For an introduction
to independence complexes and how discrete Morse theory can
be used on them we refer to \cite{E1,E2}.
An often used fact is that if $v$ is an isolated vertex of $G$,
then one obtains a complete acyclic matching on $\mathcal{F}({\tt Ind}(G))$
by matching each $\sigma$ which does not contain $v$ with
$\sigma\cup \{v\}$. The neighborhood $N(v)$ of a vertex
$v$ is the set of adjacent vertices.
The following is a version of the fold lemma of Engstr\"om \cite{E1,E2}.

\begin{lemma}\label{lemma:fold}
If $G$ is a graph with two distinct vertices
$u$ and $v$ which satisfy $N(u)\subseteq N(v)$,
then every acyclic matching on $\mathcal{F}({\tt Ind}(G\setminus v))$
can be extended to an acyclic matching on $\mathcal{F}({\tt Ind}(G))$
with no new critical cells.
\end{lemma}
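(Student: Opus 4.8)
The plan is to build the extended matching explicitly as a union of two pieces indexed by whether or not a face contains $v$, and then invoke the cluster lemma (Lemma~\ref{lemma:merge}) to glue the pieces into a single acyclic matching. First I would record the small but essential observation that $u$ and $v$ are non-adjacent: if $u\in N(v)$ then $v\in N(u)\subseteq N(v)$ would force $v\in N(v)$, which is impossible. Consequently, whenever $\sigma$ is a face of ${\tt Ind}(G)$ with $v\in\sigma$, the set $\sigma\cup\{u\}$ is again independent — no vertex of $\sigma$ lies in $N(v)$, hence none lies in $N(u)\subseteq N(v)$, so $u$ is non-adjacent to every vertex of $\sigma$. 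This is the only place the hypothesis $N(u)\subseteq N(v)$ is used, and it is what makes the construction go through.

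Next I would set up the poset map. Define $f\colon\mathcal{F}({\tt Ind}(G))\to\{0<1\}$ by sending $\sigma$ to $1$ if $v\in\sigma$ and to $0$ otherwise; this is order preserving because enlarging a face cannot remove $v$. The fibre $f^{-1}(0)$ is exactly $\mathcal{F}({\tt Ind}(G\setminus v))$, and there I would use the given acyclic matching verbatim. The fibre $f^{-1}(1)$ consists of all faces containing $v$, and on it I would use the element matching that toggles $u$, namely pairing each $\sigma$ with $v\in\sigma$ and $u\notin\sigma$ to $\sigma\cup\{u\}$. By the observation above $\sigma\cup\{u\}$ is a face, so this is a perfect matching of $f^{-1}(1)$: every face containing $v$ is matched, either upward by adding $u$ or downward by deleting it.

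It then remains to check that the toggle-$u$ matching is acyclic on $f^{-1}(1)$, and this is the one step requiring a genuine (if standard) argument. Following the orientation convention, unmatched cover relations point upward (adding a vertex) while the matched pairs point downward (removing $u$). In any directed cycle of the modified Hasse diagram, every upward step must add a vertex different from $u$, since adding $u$ would itself be a matched arc and hence point downward; meanwhile every matched step removes $u$. Thus along such a cycle $u$ is only ever removed and never added, so a cycle using any matched arc could never return to its starting face, while a cycle using no matched arc is impossible in the cardinality-graded poset $f^{-1}(1)$. Hence no directed cycle exists, and the matching on $f^{-1}(1)$ is acyclic.

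Finally I would apply Lemma~\ref{lemma:merge} to the map $f$: since each fibre carries an acyclic matching, their union is an acyclic matching on all of $\mathcal{F}({\tt Ind}(G))$. Counting critical cells, $f^{-1}(1)$ is perfectly matched and contributes none, while $f^{-1}(0)$ contributes precisely the critical cells of the original matching on $\mathcal{F}({\tt Ind}(G\setminus v))$, so no new critical cells are introduced, as required. The main obstacle is really only the acyclicity verification on $f^{-1}(1)$; everything else is bookkeeping enabled by the non-adjacency of $u$ and $v$.
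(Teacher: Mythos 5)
Your proof is correct and follows essentially the same route as the paper: the same poset map separating faces by whether they contain $v$, the same toggle-$u$ perfect matching on the fibre of faces containing $v$ (justified by $N(u)\subseteq N(v)$), and the same appeal to Lemma~\ref{lemma:merge}. You merely spell out two points the paper leaves implicit, namely that $u\notin N(v)$ and that the element matching on that fibre is acyclic.
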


\begin{proof}
Consider the poset map $f:\mathcal{F}({\tt Ind}(G))\rightarrow 2^{\{v\}}$ defined by 
$f(\sigma)=\sigma\cap\{v\}.$ The subposet $f^{-1}(\emptyset)$ is
$\mathcal{F}({\tt Ind}(G\setminus v))$ for which we have an acyclic matching.
Now we want an acyclic matching on $f^{-1}(\{v\})$ which is complete.
Every element $\sigma$ of $f^{-1}(\{v\})$ is an independent set which includes
$v$. Since no neighboors of $v$ are in $\sigma$, no neighboors of $u$ are in $\sigma$,
which makes $\sigma\cup\{u\}$ an independent set and an element of $f^{-1}(\{v\})$.
Clearly $\sigma\setminus \{u\} \in f^{-1}(\{v\})$ for every $\sigma \in f^{-1}(\{v\})$.
Our complete acyclic matching on $f^{-1}(\{v\})$ is then
\[ \{ (\sigma,\sigma\cup\{u\}) \mid u \not\in \sigma \in  f^{-1}(\{v\}) \}. \]
\end{proof}

The independence complex of a bunch of disjoint edges is isomorphic to the boundary of a
cross-polytope. This is the easiest non-trivial fact about independence complexes,
but we need a discrete Morse theory version of it as base case in induction
proofs later.

\begin{lemma}\label{lemma:disjointEdges}
If $G$ is the disjoint union of $n>0$ edges then there is an acyclic matching
on $\mathcal{F}({\tt Ind}(G))$ with one critical cell.
\end{lemma}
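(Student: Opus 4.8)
The plan is to induct on $n$, peeling off a single edge at each step and reassembling the pieces with the cluster lemma (Lemma~\ref{lemma:merge}). Write $G$ as the disjoint union of one edge $\{u,v\}$ and a graph $H$ that is the disjoint union of the remaining $n-1$ edges. The base case is $n=1$, where ${\tt Ind}(G)=\{\emptyset,\{u\},\{v\}\}$: I would match $\emptyset$ with $\{u\}$, leaving the single critical cell $\{v\}$, and reversing that one arc in the three-element Hasse diagram visibly keeps it acyclic.

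For the inductive step take $n\geq 2$ and mimic the set-up of the fold lemma by considering the poset map $f:\mathcal{F}({\tt Ind}(G))\rightarrow 2^{\{u\}}$ given by $f(\sigma)=\sigma\cap\{u\}$. I would then treat its two fibers separately. The fiber $f^{-1}(\emptyset)$ equals $\mathcal{F}({\tt Ind}(G\setminus u))$, and in $G\setminus u$ the vertex $v$ has become isolated, since its only neighbour was $u$. Hence the complete isolated-vertex matching $\sigma\leftrightarrow\sigma\cup\{v\}$ recalled above matches every cell of this fiber and leaves nothing critical. The other fiber $f^{-1}(\{u\})$ consists of exactly the faces containing $u$; because $u$'s only neighbour $v$ is then forbidden, the rule $\sigma\mapsto\sigma\setminus\{u\}$ is a poset isomorphism $f^{-1}(\{u\})\xrightarrow{\sim}\mathcal{F}({\tt Ind}(H))$. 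By the induction hypothesis $\mathcal{F}({\tt Ind}(H))$ carries an acyclic matching with one critical cell $\tau^{*}$, and transporting this matching back along the isomorphism yields an acyclic matching on $f^{-1}(\{u\})$ whose unique critical cell is $\{u\}\cup\tau^{*}$.

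Finally I would invoke Lemma~\ref{lemma:merge}: the union of the matching on $f^{-1}(\emptyset)$ and the matching on $f^{-1}(\{u\})$ is an acyclic matching on all of $\mathcal{F}({\tt Ind}(G))$. Exactly one critical cell survives, namely $\{u\}\cup\tau^{*}$, which completes the induction. As a consistency check, $\tau^{*}$ has $n-1$ vertices, so the surviving critical cell has dimension $n-1$, in agreement with the fact that ${\tt Ind}(G)$ is the boundary of the $n$-dimensional cross-polytope, a sphere $S^{n-1}$ whose reduced homology is concentrated in degree $n-1$.

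The only genuinely delicate points here are bookkeeping rather than ideas: checking that $f$ is order-preserving so that Lemma~\ref{lemma:merge} applies, verifying that $\sigma\mapsto\sigma\setminus\{u\}$ really is an order isomorphism $f^{-1}(\{u\})\cong\mathcal{F}({\tt Ind}(H))$ so that critical cells correspond, and confirming that $v$ is genuinely isolated in $G\setminus u$ so that the complete matching on the first fiber is legitimate. Each of these follows at once from the disjointness of the edges, so I expect no real obstacle; the substance of the argument is simply the clean split of the face poset according to whether the chosen vertex $u$ is used.
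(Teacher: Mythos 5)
Your proof is correct and follows essentially the same route as the paper: induction on $n$, splitting $\mathcal{F}({\tt Ind}(G))$ via the fiber map over one endpoint of a chosen edge, using the isolated-vertex matching on one fiber and the inductive hypothesis (transported along the evident poset bijection) on the other, then merging with Lemma~\ref{lemma:merge}. The only difference is cosmetic --- the paper fibers over $v$ and observes that $u$ becomes isolated, while you swap the roles of the two endpoints.
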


\begin{proof}
The proof is by induction on $n$. If $n=1$ and $V(G)=\{u,v\}$ then the
acyclic matching $\{(\emptyset,\{u\})\}$ has one critical cell.
If $n>1$ and $uv$ is an edge of $G$ then consider the poset map 
$f:\mathcal{F}({\tt Ind}(G))\rightarrow 2^{\{v\}}$ by 
$f(\sigma)=\sigma\cap\{v\}.$ The subposet $f^{-1}(\emptyset)$ is
$\mathcal{F}({\tt Ind}(G\setminus v))$ which has the isolated vertex
$u$ and thus gives a complete acyclic matching. From the subposet $f^{-1}(\{v\})$
there is a poset bijection to $\mathcal{F}({\tt Ind}(G\setminus\{u,v\}))$
by removing $v$, and by induction we have an acyclic matching on
$\mathcal{F}({\tt Ind}(G\setminus\{u,v\}))$ with one  critical cell.
Patching $f^{-1}(\emptyset)$ and $f^{-1}(\{v\})$ together gives one 
 critical cell.
\end{proof}

The following is a combinatorial version of the main theorem of 
Ehrenborg and Hetyei~\cite{EH} on forests.

\begin{proposition}\label{prop:Forest}
If $G$ is a forest then there is an acyclic matching on
$\mathcal{F}({\tt Ind}(G))$ with either zero or one critical cell.
\end{proposition}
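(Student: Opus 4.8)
The plan is to induct on the number of vertices of $G$, using the merge pattern described after Lemma~\ref{lemma:merge} together with the isolated-vertex fact recalled just before Lemma~\ref{lemma:fold}. The base case is the graph with no vertices, whose independence complex is $\{\emptyset\}$; its face poset is a single element, so the empty matching has exactly one critical cell. For the inductive step I would first dispose of the easy case: if $G$ has an isolated vertex $v$, then the isolated-vertex matching (pairing each $\sigma$ avoiding $v$ with $\sigma\cup\{v\}$) is complete regardless of the rest of $G$, so the resulting matching has zero critical cells and we are done.

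So assume $G$ is nonempty with no isolated vertex. Since $G$ is then a forest of minimum degree at least one, every component is a tree on at least two vertices and hence contains a leaf $u$; let $w$ be its unique neighbor. I would apply Lemma~\ref{lemma:merge} with $D=\{w\}$, that is, to the poset map $f:\mathcal{F}({\tt Ind}(G))\to 2^{\{w\}}$ given by $f(\sigma)=\sigma\cap\{w\}$, which is a poset map because intersecting with a fixed set is monotone. It then remains to supply an acyclic matching on each of the two fibers.

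On $f^{-1}(\emptyset)=\mathcal{F}({\tt Ind}(G\setminus w))$, the vertex $u$ has lost its only neighbor and is now isolated, so the isolated-vertex matching is complete and contributes no critical cell. On the other fiber, every $\sigma\in f^{-1}(\{w\})$ contains $w$ and none of its neighbors, so $\sigma\mapsto\sigma\setminus\{w\}$ is an order-isomorphism from $f^{-1}(\{w\})$ onto $\mathcal{F}({\tt Ind}(G\setminus N[w]))$, where $N[w]=\{w\}\cup N(w)$. As an induced subgraph of a forest, $G\setminus N[w]$ is again a forest, and it has strictly fewer vertices than $G$ because $w$ was removed; by the induction hypothesis it carries an acyclic matching with zero or one critical cell, which transports across the isomorphism. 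Merging the two fiber matchings via Lemma~\ref{lemma:merge} yields an acyclic matching on $\mathcal{F}({\tt Ind}(G))$ whose number of critical cells is $0+(0\text{ or }1)$, as required.

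I expect the only point requiring genuine care to be the identification of the fiber $f^{-1}(\{w\})$ with $\mathcal{F}({\tt Ind}(G\setminus N[w]))$: one must verify that deletion of $w$ is an order-respecting bijection, which rests on the two observations that an independent set containing $w$ cannot meet $N(w)$, and conversely that adjoining $w$ to any independent set of $G\setminus N[w]$ stays independent in $G$. One must also confirm that this step strictly decreases the vertex count so the induction terminates, which is immediate since $w\in N[w]$. Everything else is routine bookkeeping, the contributions of the two fibers simply adding under the merge lemma.
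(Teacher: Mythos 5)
Your proof is correct, but it takes a genuinely different route from the paper's. The paper inducts on the number of edges, and its key step is the fold lemma (Lemma~\ref{lemma:fold}): for a leaf $u$ in a component with more than two vertices there is a vertex $v$ at distance two satisfying $N(u)\subseteq N(v)$, and deleting $v$ extends any acyclic matching without new critical cells; this forces a separate base case, Lemma~\ref{lemma:disjointEdges}, for the situation where every component is a single edge and no such $v$ exists. You instead split $\mathcal{F}({\tt Ind}(G))$ at the leaf's neighbour $w$ itself: the fiber of faces avoiding $w$ collapses completely because $u$ becomes isolated in $G\setminus w$, while the fiber of faces containing $w$ is identified with $\mathcal{F}({\tt Ind}(G\setminus N[w]))$ and handled by induction on the vertex count. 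This is essentially the deletion--link decomposition the paper uses inside the proof of Lemma~\ref{lemma:disjointEdges}, pushed to the general case; it buys you a single uniform induction that subsumes both the fold step and the disjoint-edges base case, with the cluster lemma and the isolated-vertex matching as the only ingredients. What the paper's version buys in exchange is the fold lemma as a reusable standalone statement and an explicit handle on where the critical cell sits. One point you were right to flag and handle: the base case must be the empty graph, whose face poset is the single critical cell $\emptyset$, since $G\setminus N[w]$ may be empty.
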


\begin{proof}
We do induction on the number of edges of $G$.
If $G$ has an isolated vertex then we have an acyclic matching
with no  critical cells. If $G$ is a collection of
disjoint edges, then by Lemma~\ref{lemma:disjointEdges} there is
an acyclic matching with one  critical cell.

Otherwise there is a vertex $u$ of degree one, which is in
a connected component with more than two vertices. In that case there
has to be a vertex $v$ of distance two from $u$, and it
will satisfy $N(u)\subseteq N(v)$. By Lemma~\ref{lemma:fold}
we can extend every acyclic matching on $\mathcal{F}({\tt Ind}(G \setminus v))$
to $\mathcal{F}({\tt Ind}(G))$ without introducing new
 critical cells. And by induction there is an acyclic
mathing on $\mathcal{F}({\tt Ind}(G \setminus v))$ with none
or one  critical cells, since $G \setminus v$ is a forest.
\end{proof}

\section{Bounding Euler characteristic with the decycling number}

The following theorem is our main result.

\begin{theorem}\label{theorem:main}
\[\sum_i \dim \tilde {H}_i({\tt Ind}(G);\mathbb{Q}) \leq 
\min_{\substack{ \emptyset \neq D\subseteq V(G)\\ 
  G\setminus D \textrm{ is a forest}  } } |{\tt Ind}(G[D])|. \]
\end{theorem}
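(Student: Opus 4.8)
The plan is to fix an arbitrary nonempty $D \subseteq V(G)$ with $G \setminus D$ a forest, and build an acyclic matching on $\mathcal{F}({\tt Ind}(G))$ whose number of critical cells is at most $|{\tt Ind}(G[D])|$. Once such a matching exists, the homological corollary~\eqref{eq} gives
\[ \sum_i \dim \tilde{H}_i({\tt Ind}(G);\mathbb{Q}) = \dim \bigoplus_i \tilde{H}_i({\tt Ind}(G);\mathbb{Q}) \leq |{\tt Ind}(G) \setminus \mathcal{C}| \leq |{\tt Ind}(G[D])|, \]
and since $D$ was an arbitrary valid decycling set, taking the minimum over all such $D$ yields the theorem. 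So the entire content is in constructing the matching with the right critical-cell count.

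The natural construction follows the pattern advertised after Lemma~\ref{lemma:merge}: consider the poset map $f:\mathcal{F}({\tt Ind}(G)) \to \mathcal{F}({\tt Ind}(G[D]))$ given by $\sigma \mapsto \sigma \cap D$. This is a well-defined poset map, and its image lands in ${\tt Ind}(G[D])$ because any independent set of $G$ restricts to an independent set of the induced subgraph. I would then examine each fiber $f^{-1}(\tau) = \{\sigma \in {\tt Ind}(G) \mid \sigma \cap D = \tau\}$ for $\tau \in {\tt Ind}(G[D])$. The key observation is that a set $\sigma$ lies in this fiber precisely when $\sigma = \tau \sqcup \rho$ where $\rho \subseteq V(G) \setminus D$ is independent in $G$ and, moreover, no vertex of $\rho$ is adjacent to any vertex of $\tau$. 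Thus $\rho$ ranges over the independent sets of the graph obtained from $G \setminus D$ by deleting the (closed) neighborhood of $\tau$; call this graph $H_\tau$. Since $H_\tau$ is an induced subgraph of the forest $G \setminus D$, it is itself a forest.

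Now I would apply Proposition~\ref{prop:Forest} to each $H_\tau$: there is an acyclic matching on $\mathcal{F}({\tt Ind}(H_\tau))$ with at most one critical cell. Transporting this matching across the natural order-isomorphism $f^{-1}(\tau) \cong \mathcal{F}({\tt Ind}(H_\tau))$ (the map $\sigma \mapsto \sigma \setminus \tau$, which shifts every element up by the fixed set $\tau$ and so preserves covers and acyclicity) gives an acyclic matching on each fiber with at most one critical cell. By the cluster lemma (Lemma~\ref{lemma:merge}), the union over all $\tau \in {\tt Ind}(G[D])$ is an acyclic matching on all of $\mathcal{F}({\tt Ind}(G))$. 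Each fiber contributes at most one critical cell and there are exactly $|{\tt Ind}(G[D])|$ fibers, so the total number of critical cells is at most $|{\tt Ind}(G[D])|$, as needed.

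The main obstacle I anticipate is purely bookkeeping rather than conceptual: one must verify carefully that the fiber $f^{-1}(\tau)$ is genuinely isomorphic as a poset to $\mathcal{F}({\tt Ind}(H_\tau))$, including the subtle point about the empty set. Because the face posets here include $\emptyset$, the fiber over $\tau = \emptyset$ should correspond to $\mathcal{F}({\tt Ind}(G \setminus D))$ with its empty face intact, while the fibers over nonempty $\tau$ correspond to posets whose least element is $\tau$ itself; the translation $\sigma \mapsto \sigma \setminus \tau$ must be checked to respect this so that Proposition~\ref{prop:Forest} applies verbatim. One should also confirm that $H_\tau$ can be empty or have an isolated vertex, in which cases Proposition~\ref{prop:Forest} still delivers a matching with zero or one critical cell, so the bound of one critical cell per fiber is never violated.
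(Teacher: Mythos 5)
Your proposal is correct and follows essentially the same route as the paper: the same poset map $\sigma\mapsto\sigma\cap D$, the same identification of each fiber with the face poset of the independence complex of a forest (your $H_\tau$ is exactly the paper's $G\setminus(D\cup\bigcup_{v\in\tau}N(v))$), and the same combination of Proposition~\ref{prop:Forest} with Lemma~\ref{lemma:merge} and inequality~(\ref{eq}). If anything, your choice of an arbitrary valid $D$ (rather than one of minimum size $\varphi(G)$, as the paper writes) is the more careful reading of what the stated minimum requires.
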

\begin{proof}
Let $D$ be a subset of $V(G)$ of size $\varphi(G)$ such that
$G\setminus D$ is a forest. If we remove even more vertices
from $G$ it will still be a forest, and so in particular, for
every $L\subseteq D,$
\[ G \setminus \Bigl( D \cup \bigcup_{v\in L} N(v) \Bigr) \]
is a forest. Now we will prove that there is an acyclic
matching on ${\tt Ind}(G)$ with at most $|{\tt Ind}(G[D])|$
 critical cells. Consider the poset map
$f: \mathcal{F}({\tt Ind}(G))\rightarrow \mathcal{F}({\tt Ind}(G[D]))$ defined by
$f(\sigma)=\sigma \cap D$. We have split the poset
into $|{\tt Ind}(G[D])|$ subposets and the next step is to show
that each of them have at most one critical cell under some
acyclic matching.
For any $L\subseteq D$ we have a poset bijection
\[  \lambda: \mathcal{F}\Bigl({\tt Ind}\Bigl(G \setminus 
    \Bigl( D \cup \bigcup_{v\in L} N(v) \Bigr)\Bigr)\Bigr)
    \rightarrow f^{-1}(L) \]
given by $\lambda(\sigma)=\sigma\cup L$. By Proposition~\ref{prop:Forest},
there is an acyclic matching on $f^{-1}(L)$ with at most
one  critical cell, since $G \setminus 
( D \cup \cup_{v\in L} N(v) )$ is a forest. By
Lemma~\ref{lemma:merge} we can patch the $|{\tt Ind}(G[D])|$
acyclic matchings together and the new acyclic matching
has at most $|{\tt Ind}(G[D])|$ critical cells.
By equality (\ref{eq}) with $\Sigma= {\tt Ind}(G)$ and
$\mathcal{C}$ as the described acyclic matchings with
$|{\tt Ind}(G[D])|$ critical cells, we are done. 
\end{proof}

The \emph{decycling number}, $\varphi(G)$, of a graph $G$ is
the minimum number of vertices whose deletion from $G$ turns
it into a forest.

\begin{corollary}\label{cor:main}
\[ |\tilde{\chi}({\tt Ind}(G))| \leq
\sum_i \dim \tilde {H}_i({\tt Ind}(G);\mathbb{Q}) \leq 
\min_{\substack{ \emptyset \neq D\subseteq V(G)\\ 
  G\setminus D \textrm{ is a forest}  } } |{\tt Ind}(G[D])|
\leq 2^{\varphi(G)}.
 \]
\end{corollary}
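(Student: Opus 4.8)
The goal is to establish the chain of inequalities in Corollary~\ref{cor:main}. The plan is to verify each of the four comparisons separately, drawing on Theorem~\ref{theorem:main} for the central link and on elementary facts about Euler characteristics and independence complexes for the outer two.

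First I would handle the leftmost inequality, $|\tilde{\chi}({\tt Ind}(G))| \leq \sum_i \dim \tilde{H}_i({\tt Ind}(G);\mathbb{Q})$. This is a completely general statement about any finite simplicial complex $\Sigma$. Recall that the reduced Euler characteristic can be computed from homology via $\tilde{\chi}(\Sigma) = \sum_i (-1)^i \dim \tilde{H}_i(\Sigma;\mathbb{Q})$. Taking absolute values and applying the triangle inequality gives $|\tilde{\chi}(\Sigma)| = |\sum_i (-1)^i \dim \tilde{H}_i| \leq \sum_i \dim \tilde{H}_i$, since each Betti number is a nonnegative integer. Specializing $\Sigma = {\tt Ind}(G)$ yields the desired bound with no further work.

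The middle inequality, $\sum_i \dim \tilde{H}_i({\tt Ind}(G);\mathbb{Q}) \leq \min_D |{\tt Ind}(G[D])|$, is exactly the content of Theorem~\ref{theorem:main}, so I would simply cite it. The only point worth remarking is that the minimum ranges over all nonempty $D \subseteq V(G)$ for which $G \setminus D$ is a forest; such $D$ exist whenever $G$ has at least one vertex (indeed one may take $D = V(G)$), so the minimum is over a nonempty index set and is well-defined.

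The rightmost inequality, $\min_D |{\tt Ind}(G[D])| \leq 2^{\varphi(G)}$, is where I would do the real bookkeeping, though it remains routine. By the definition of the decycling number stated just above the corollary, there exists a set $D$ of exactly $\varphi(G)$ vertices whose deletion leaves a forest. For this particular choice of $D$, the minimum on the left is bounded above by $|{\tt Ind}(G[D])|$. I would then bound $|{\tt Ind}(G[D])|$ by the total number of subsets of the vertex set of $G[D]$: since ${\tt Ind}(G[D])$ is a collection of subsets of a set of size $|D| = \varphi(G)$, we have $|{\tt Ind}(G[D])| \leq |2^{D}| = 2^{\varphi(G)}$. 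The main thing to watch here is the convention, flagged in the Notation subsection, that $\emptyset$ is counted as a face of the complex, so that the count $|{\tt Ind}(G[D])|$ includes the empty face and matches the $2^{\varphi(G)}$ upper bound exactly. Chaining the three verified inequalities together completes the proof; no single step is genuinely hard, and the only subtlety is keeping the empty-set conventions consistent across the counts.
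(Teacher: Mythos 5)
Your proof is correct and follows essentially the same route as the paper: the triangle inequality on the homological Euler characteristic formula for the left-hand bound, Theorem~\ref{theorem:main} for the middle, and the bound $|{\tt Ind}(G[D])|\leq 2^{|D|}$ applied to a decycling set of size $\varphi(G)$ for the right. Your remark about the empty-face convention is a reasonable extra care but changes nothing of substance.
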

\begin{proof}
The left-hand inequality is
\[ |\tilde{\chi}({\tt Ind}(G))| = \Bigl| \sum_i (-1)^i \dim 
\tilde {H}_i({\tt Ind}(G);\mathbb{Q}) \Bigr|
\leq \sum_i \dim \tilde {H}_i({\tt Ind}(G);\mathbb{Q}) \]
and the right-hand inequality is
\[\min_{\substack{ D\subseteq V(G)\\ 
  G\setminus D \textrm{ is a forest}  } } |{\tt Ind}(G[D])| \leq 
  \min_{\substack{ D\subseteq V(G)\\ 
  G\setminus D \textrm{ is a forest}  } } 2^{|D|} \leq 2^{\varphi(G)}.  \]
\end{proof}

It is not hard to find examples of graphs with
$|\tilde{\chi}({\tt Ind}(G))| = 2^{\varphi(G)}$. For example if $G$ is a cycle
with $3n$ vertices, then ${\tt Ind}(G)$ is a wedge of two spheres
of the same dimension \cite{K}, and $\varphi(G)=1$. In Proposition
11.43 of \cite{J} an acyclic matching on $\mathcal{F}({\tt Ind}(G))$ 
with two critical cells is constructed.

\section{Bounds for some lattices}\label{sec:bounds}

Recall that the Fibonacci number $F_n$ is defined by $F_1=F_2=1$ and
$F_n=F_{n-1}+F_{n-2}$ for $n>2$, and the sequence starts with $1,1,2,3,5,8.$
Explicitly we have $F_n = 5^{-1/2}(\phi^n-(-\phi)^{-n})$ where $\phi$ is
the golden ratio $(1+\sqrt{5})/2$.
The graph $P_n$ is the path on $n$ vertices.

\begin{proposition}
$|{\tt Ind}(P_n)|=F_{n+2}.$
\end{proposition}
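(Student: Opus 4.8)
The plan is to set $a_n = |{\tt Ind}(P_n)|$, the number of independent sets of the path $P_n$ (with the convention of this paper that the empty set is counted, so $\emptyset \in {\tt Ind}(P_n)$), and to prove $a_n = F_{n+2}$ by induction on $n$ using the Fibonacci recurrence. First I would record the base cases: the independence complex of the single-vertex graph has the two faces $\emptyset$ and the vertex itself, so $a_1 = 2 = F_3$, and on two vertices the only forbidden pair is the edge, giving $a_2 = 3 = F_4$. It is convenient to also set $a_0 = 1 = F_2$, coming from the empty graph whose only face is $\emptyset$; this makes the recurrence valid already for small $n$ and matches the shifted Fibonacci indexing.

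The heart of the argument is a recurrence obtained by splitting the independent sets according to an endpoint. Label the vertices of $P_n$ as $v_1,\dots,v_n$ along the path and classify each independent set $I$ by whether it contains $v_n$. If $v_n \notin I$, then $I$ is an independent set of the induced subpath $P_n[\{v_1,\dots,v_{n-1}\}] = P_{n-1}$, and there are $a_{n-1}$ such sets. If $v_n \in I$, then independence forces $v_{n-1} \notin I$, so $I \setminus \{v_n\}$ is an independent set of $P_n[\{v_1,\dots,v_{n-2}\}] = P_{n-2}$; conversely each such set extends uniquely by adding $v_n$, giving $a_{n-2}$ possibilities. Hence $a_n = a_{n-1} + a_{n-2}$ for $n \geq 2$. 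Since $F_{n+2}$ satisfies the same recurrence and agrees with $a_n$ at $n=0,1$, induction yields $a_n = F_{n+2}$.

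I should note that this split is exactly the poset map $f(\sigma) = \sigma \cap \{v_n\}$ used throughout the preceding section: the fibre $f^{-1}(\emptyset)$ is $\mathcal{F}({\tt Ind}(P_{n-1}))$, while $f^{-1}(\{v_n\})$ is in poset bijection with $\mathcal{F}({\tt Ind}(P_{n-2}))$ via deletion of $v_n$. So the combinatorial recurrence above is the cardinality shadow of the same decomposition, and one could phrase the proof in that language if a uniform presentation is desired.

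There is no substantive obstacle here; the only things to be careful about are bookkeeping conventions, not mathematics. Specifically, I must keep the empty face in the count (so that $|{\tt Ind}(\cdot)|$ matches the face-poset cardinality used elsewhere), and I must verify the two identifications of induced subgraphs $P_n \setminus v_n = P_{n-1}$ and $P_n \setminus \{v_{n-1}, v_n\} = P_{n-2}$, which are immediate from the structure of the path. The base cases and the index shift between $a_n$ and $F_{n+2}$ are the likeliest places to introduce an off-by-one error, so I would double-check $a_1 = F_3$ and $a_2 = F_4$ explicitly before invoking the inductive step.
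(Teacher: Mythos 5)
Your proof is correct and follows essentially the same route as the paper: the same base cases $|{\tt Ind}(P_1)|=2=F_3$, $|{\tt Ind}(P_2)|=3=F_4$, and the same recurrence obtained by splitting independent sets according to whether the last vertex of the path is occupied. The extra remarks about the poset map and the $a_0$ convention are harmless elaborations of what the paper states more tersely.
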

\begin{proof}
Clearly $|{\tt Ind}(P_1)|=2=F_3$ and $|{\tt Ind}(P_2)|=3=F_4$. Let $n>2$.
If the last vertex of the path is occupied, the one next to it is empty,
and the other ones can be picked in $|{\tt Ind}(P_{n-2})|$ ways. If it
is not occupied, the rest can be picked in $|{\tt Ind}(P_{n-1})|$ ways.
\end{proof}

Now we will use the results from the previous section on some lattices. In 
each figure there are three lattices illustrated, and from left to right they are: 
The lattice we want to calculate the Witten index for, the acyclic lattice, 
and the lattice of removed vertices. For large lattices the influence
from the choice of open, cylindrical, or closed boundaries is negligible.

\subsubsection*{The hexagonal lattice}
\begin{center}
  \includegraphics*{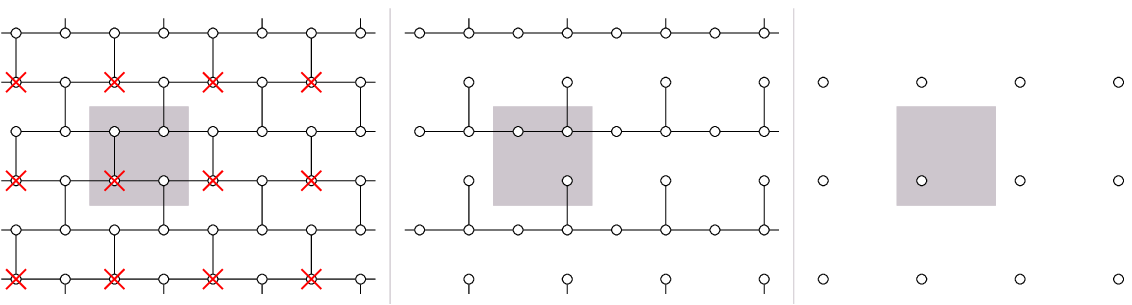}
\end{center}
From a $2m\times 2n$ hexagonal lattice we remove $mn$ vertices to get an
acyclic lattice. By Corollary~\ref{cor:main}, the absolute value of
the Witten index is at most $2^{mn}$ which is $2^{1/4}\approx 1.19$
per vertex.

\subsubsection*{The hexagonal dimer lattice}
\begin{center}
  \includegraphics*{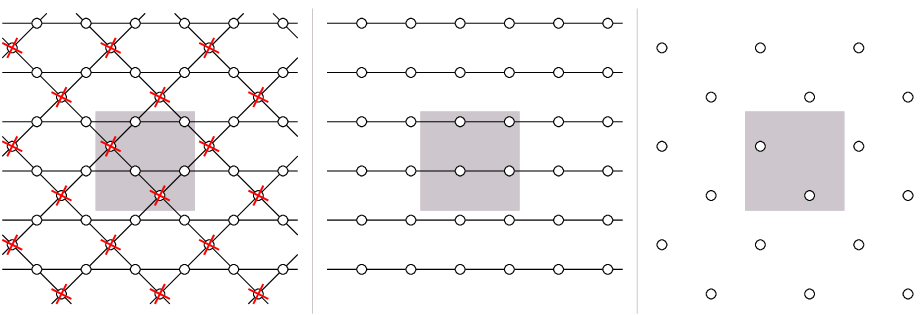}
\end{center}
A hexagonal dimer lattice built from $m\times n$ grey blocks has $6mn$
vertices and we remove $2mn$ of them to get an acyclic lattice, and
so by Corollary~\ref{cor:main}, $|W|\leq 2^{2mn}$ which is 
$2^{1/3}\approx 1.26$ per vertex.

\subsubsection*{The triangular lattice}
\begin{center}
  \includegraphics*{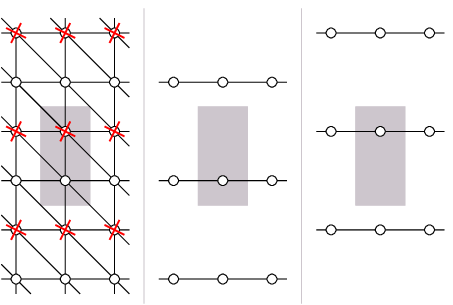}
\end{center}
From a triangular $2m\times n$ lattice we remove $m$ paths of length
$n$ to get an acyclic lattice. By Corollary~\ref{cor:main} we have
that 
\[|W|\leq |{\tt Ind}(P_n)|^m = F_{n+2}^m = 
  5^{-m/2}(\phi^{n+2}-(-\phi)^{-n-2})^m
   \approx \phi^{mn}\]
with an approximate $\sqrt{\phi}\approx 1.27$ contribution per vertex.

\subsubsection*{The triangular dimer lattice}
\begin{center}
  \includegraphics*{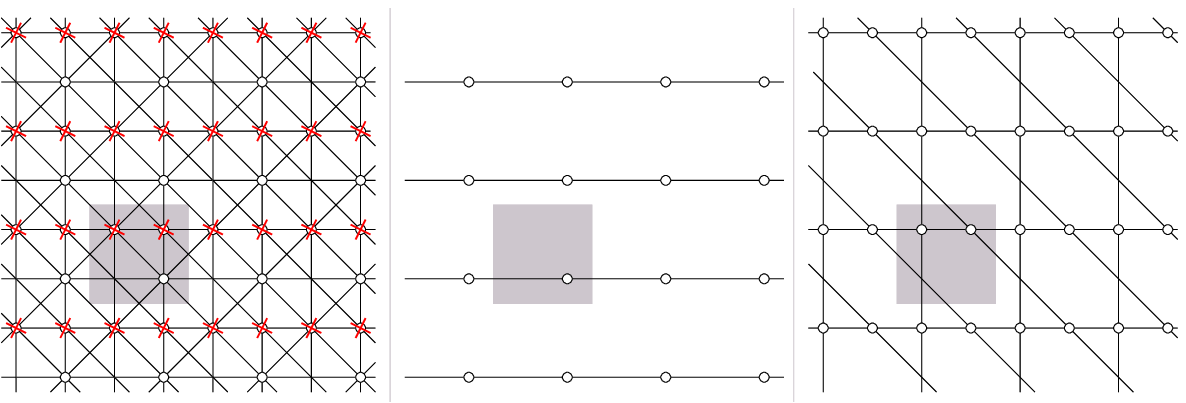}
\end{center}
A triangular dimer lattice built from $m\times n$ grey blocks has $3mn$
vertices
and we remove $2mn$ of them to get an acyclic lattice. The vertices
we removed induce a lattice for which the size of the independence
complex is not easily calculated. If we remove edges from it we get
more independent sets and a weaker upper bound, but perhaps
a computable one. Remove all diagonal edges to get $m$ paths of 
length $2n$, and by Corollary~\ref{cor:main} we have,
\[|W|\leq |{\tt Ind}(P_{2n})|^m = F_{2n+2}^m = 
  5^{-m/2}(\phi^{2n+2}-(-\phi)^{-2n-2})^m
   \approx \phi^{2mn}\]
with an approximate $\phi^{2/3} \approx 1.38$ contribution per vertex.

\subsubsection*{The square dimer lattice}
\begin{center}
  \includegraphics*{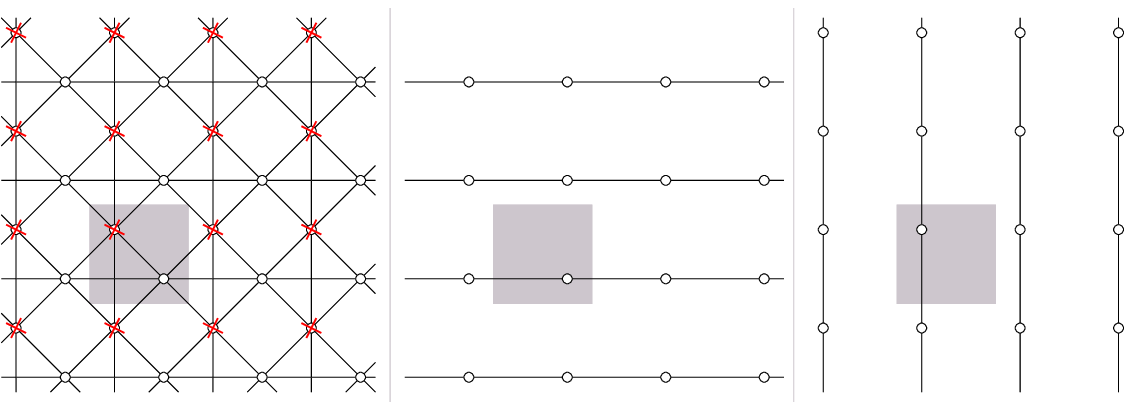}
\end{center}
From a square dimer $2m\times 2n$ lattice we remove $n$ paths of length
$m$ to get an acyclic lattice. By Corollary~\ref{cor:main} we have
that 
\[|W|\leq |{\tt Ind}(P_m)|^n = F_{m+2}^n = 
  5^{-n/2}(\phi^{m+2}-(-\phi)^{-m-2})^n
   \approx \phi^{mn}\]
with an approximate $\sqrt{\phi}\approx 1.27$ contribution per vertex.

\subsection{A comparison with van Eertens calculations}

Using computer calculations for lattices of size $m\times n$ 
with $m,n\leq 15$, van Eerten \cite{vE} approximated the contribution
to $|W|$ per vertex.

\begin{center}
\begin{tabular}{l|ll}
Lattice type & van Eertens value & Upper bound in this paper \\
\hline
Hexagonal & $1.2$ & $1.19$ \\
Hexagonal dimer & $1.25$ & $1.26$ \\
Triangular & $1.14$ & $1.27$ \\
Triangular dimer & $1.36$ & $1.38$ \\
Square dimer & $1.15$ & 1.27 \\
\end{tabular}
\end{center}

The values from \cite{vE} for dimer-models are tabulated here
per vertex and not per site.

\section{The cohomological method and the 3-rule}

In this section we treat the case that $G\setminus D$ is
not only a forest, but it completely lacks edges. If we
also impose conditions on the differentials
of the Morse complex \cite{F1}, then we recover
the \emph{cohomological method} of \cite{FHHS, FS, HS}.

\begin{theorem}\label{theo:coho}
If $G$ is a graph and $D$ a set of vertices such that
$G\setminus D$ has no edges, then there is a Morse matching
on ${\tt Ind}(G)$ whose critical cells are the 
$\sigma\in {\tt Ind}(G[D])$ such that 
\[ \bigcup_{v\in \sigma}N(v) \supseteq V(G)\setminus D. \]
\end{theorem}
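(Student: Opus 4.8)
The plan is to mirror the structure of the proof of Theorem~\ref{theorem:main}, but now exploit the stronger hypothesis that $G\setminus D$ has no edges to pin down \emph{exactly which} critical cells survive, rather than merely bounding their number. As before, I would take the poset map $f:\mathcal{F}({\tt Ind}(G))\rightarrow\mathcal{F}({\tt Ind}(G[D]))$ given by $f(\sigma)=\sigma\cap D$, and analyze each fiber $f^{-1}(\sigma)$ separately for $\sigma\in{\tt Ind}(G[D])$. The key observation is that a fiber $f^{-1}(\sigma)$ consists of all independent sets $\tau$ of $G$ with $\tau\cap D=\sigma$; writing $\tau=\sigma\cup S$ with $S\subseteq V(G)\setminus D$, the set $S$ may be \emph{any} subset of $V(G)\setminus D$ that avoids the neighbors of $\sigma$, because $V(G)\setminus D$ is an independent set in $G$ (it has no edges) and so the only independence constraints on $S$ come from adjacency to the chosen vertices of $\sigma$.

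Concretely, via the bijection $\lambda(S)=\sigma\cup S$ the fiber $f^{-1}(\sigma)$ is isomorphic as a poset to the Boolean lattice on the vertex set $W_\sigma := (V(G)\setminus D)\setminus\bigcup_{v\in\sigma}N(v)$, that is, the full simplex on the free vertices of $V(G)\setminus D$ that are not forbidden by $\sigma$. The next step is to match each such fiber completely whenever $W_\sigma\neq\emptyset$: fix any vertex $w\in W_\sigma$ and use the standard isolated-vertex matching $(\rho,\rho\cup\{w\})$ for $\rho\in f^{-1}(\sigma)$ with $w\notin\rho$, which is a complete acyclic matching on a Boolean interval and leaves no critical cell. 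This is exactly the fold-type argument already used in Lemma~\ref{lemma:disjointEdges}. On the other hand, $W_\sigma=\emptyset$ precisely when $\bigcup_{v\in\sigma}N(v)\supseteq V(G)\setminus D$, in which case the fiber is the single cell $\sigma$ itself, which is then forced to be critical.

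Having matched each fiber, I would invoke Lemma~\ref{lemma:merge} with the poset map $f$ to patch the per-fiber matchings into one acyclic matching on all of $\mathcal{F}({\tt Ind}(G))$. By the fiber analysis the critical cells are exactly those $\sigma\in{\tt Ind}(G[D])$ with $W_\sigma=\emptyset$, i.e.\ those satisfying $\bigcup_{v\in\sigma}N(v)\supseteq V(G)\setminus D$, which is the claimed description. I expect the only delicate point to be verifying that the choice of the distinguished vertex $w\in W_\sigma$ can be made coherently across fibers without destroying acyclicity of the merged matching; but since Lemma~\ref{lemma:merge} guarantees acyclicity of the union from acyclicity on each fiber separately, and each per-fiber matching is acyclic (being the elementary isolated-vertex matching on a Boolean lattice), this obstacle dissolves and no global compatibility condition on the choices of $w$ is needed.
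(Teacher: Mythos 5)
Your proposal is correct and is essentially the paper's own argument: the paper's proof simply says to reuse the Morse matching from Theorem~\ref{theorem:main}, and your fiber-by-fiber analysis (each fiber $f^{-1}(\sigma)$ is the full simplex on $W_\sigma$, completely matched via an isolated vertex when $W_\sigma\neq\emptyset$, and a single forced critical cell when $W_\sigma=\emptyset$) is exactly what that matching does when $G\setminus D$ has no edges. Your closing remark is also right: Lemma~\ref{lemma:merge} makes the per-fiber choices of $w$ independent, so no global compatibility is needed.
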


\begin{proof}
Use the same Morse matching as in the proof of 
Theorem \ref{theorem:main}.
\end{proof}

Two vertices of a graph are at least distance
three apart if they are non-adjacent and share no neighbors.
The following is a generalization of the \emph{``3-rule''} of
\cite{FHHS, FS, HS}.

\begin{corollary}\label{cor:3}
If $G$ is a graph and $R$ is a set of vertices such that
\begin{itemize}
  \item[i)] all pairs of vertices of $R$ are
     at least distance three apart, and
  \item[ii)] no independent set of $G\setminus R$ is larger
     than $R$,
\end{itemize}
then ${\tt Ind}(G)$ is a wedge of $K$ spheres of 
dimension $(|R|-1)$. Construct a graph $G'$ by starting with 
$G\setminus R$ and add cliques on $\{ v \mid vw \textrm{ edge of } G\}$ for all $w\in R$. 
The number of independent sets of $G'$ with $|R|$ elements is $K$.
\end{corollary}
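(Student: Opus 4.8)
The plan is to derive everything from Theorem~\ref{theo:coho} by taking $D = V(G)\setminus R$. Condition (i) in particular forces $R$ to be an independent set (distance three implies non-adjacency), so $G\setminus D = G[R]$ has no edges and Theorem~\ref{theo:coho} applies verbatim: there is a Morse matching on ${\tt Ind}(G)$ whose critical cells are exactly the independent sets $\sigma$ of $G\setminus R$ with $\bigcup_{v\in\sigma}N(v)\supseteq R$.

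First I would pin down the dimension of every critical cell. The key consequence of (i) is that no vertex of $G\setminus R$ can be adjacent to two vertices of $R$, since those two vertices of $R$ would then share a neighbor; hence each $v\in\sigma$ covers at most one element of $R$, and covering all of $R$ forces $|\sigma|\geq |R|$. Condition (ii) supplies the reverse inequality $|\sigma|\leq |R|$, so every critical cell has exactly $|R|$ vertices and therefore dimension $|R|-1$. Since the critical cells are then concentrated in the single dimension $|R|-1$ (and $\emptyset$, which does not cover $R$, is matched away), the standard wedge-of-spheres consequence of discrete Morse theory applies: ${\tt Ind}(G)$ is homotopy equivalent to a wedge of spheres all of dimension $|R|-1$, the number of spheres being the number of critical cells.

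It then remains to count the critical cells, and this is where I would spend the most care. Unwinding their description: a critical cell $\sigma$ has $|R|$ vertices, is independent in $G\setminus R$, and — because $|\sigma|=|R|$ while each vertex covers at most one element of $R$ yet together they cover all of $R$ — it meets each block $N(w)$, $w\in R$, in exactly one vertex (the blocks being pairwise disjoint by (i)). Thus a critical cell is precisely a transversal $\{v_w : w\in R\}$ with $v_w\in N(w)$ that is independent in $G\setminus R$. I would now compare these with the size-$|R|$ independent sets of $G'$: placing a clique on each $N(w)$ exactly encodes the constraint ``at most one chosen vertex per block'', and independence within $G\setminus R$ is retained because $G'$ contains $G\setminus R$; so every transversal of the above kind is a size-$|R|$ independent set of $G'$, and conversely any such set that happens to meet each block once is a critical cell.

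The hard part will be the converse direction of this last bijection: checking that a size-$|R|$ independent set of $G'$ cannot ``waste'' a vertex lying outside $\bigcup_{w\in R}N(w)$ and thereby miss one of the $|R|$ blocks, i.e.\ that every size-$|R|$ independent set of $G'$ is an honest transversal hitting each block exactly once. This is the step that must combine conditions (i) and (ii) (together with the way $R$ meets the rest of $G$ in the intended setup), and it is precisely the place where a careless count would over-report the number of spheres. Once it is settled, the number of critical cells equals the number of size-$|R|$ independent sets of $G'$, which is the claimed $K$, and the proof is complete.
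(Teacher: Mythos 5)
Your route is the same as the paper's: take $D=V(G)\setminus R$ in Theorem~\ref{theo:coho}, use (i) to see that each vertex of $G\setminus R$ has at most one neighbour in $R$ and that the blocks $N(w)$, $w\in R$, are pairwise disjoint, use (ii) to force every critical cell to have exactly $|R|$ vertices, conclude the wedge of spheres from the fact that all critical cells sit in one dimension, and identify the critical cells with the independent transversals of the blocks. All of this matches the paper step for step. The one point you leave open --- whether every size-$|R|$ independent set of $G'$ is in fact such a transversal --- is precisely the point the paper dispatches in a single sentence (``Enforcing this condition on the maximal independent sets of $G\setminus R$ is the same as adding cliques on $N(w)$ for all $w\in R$''), so there is no difference of method, only a difference in how honestly the last step is treated.

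Since you explicitly flag that step as ``the hard part'' still to be settled, your proof is incomplete as written; and your suspicion is in fact justified: the step cannot be settled from (i) and (ii) alone. Take $G=P_3$ with vertices $a,b,c$, edges $ab$ and $bc$, and $R=\{a\}$. Conditions (i) and (ii) hold, and $G'$ is the single edge $bc$, which has two independent sets of size $|R|=1$; but ${\tt Ind}(P_3)$ is an edge together with an isolated vertex, homotopy equivalent to a single $S^0$, so $K=1$. The culprit is $\{c\}$: a maximum independent set of $G\setminus R$ that survives in $G'$ yet fails $c\in N(a)$, hence is not a critical cell --- exactly the ``wasted vertex outside $\bigcup_{w\in R}N(w)$'' you were worried about. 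The missing ingredient is the additional hypothesis, implicit in the paper and satisfied in all of its examples (the $3n$-cycle, the martini lattice, the grids), that every vertex of $G\setminus R$ has a neighbour in $R$, i.e.\ $\bigcup_{w\in R}N(w)=V(G)\setminus R$. Under that hypothesis the blocks $N(w)$ partition $V(G)\setminus R$, so an $|R|$-element independent set of $G'$ meets each block at most once and therefore exactly once, and the converse of your bijection is immediate. To complete your argument you must add and invoke that hypothesis; without it the claimed formula for $K$ is simply false.
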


\begin{proof}
The set $D$ in Theorem \ref{theo:coho} is $V(G)\setminus R$. 
For any $v\in V(G)\setminus R$ its neighborhood can only contain one vertex in $R$, 
since the vertices in $R$ are pairwise at least distance three apart.
So to get a $\sigma\in {\tt Ind}(G\setminus R)$ such that 
$ \cup_{v\in \sigma}N(v) \supseteq R $ we need a $\sigma$
with at least $|R|$ elements. But that is also the
maximum size of an independent set of $G\setminus R$.

The property that $ \cup_{v\in \sigma}N(v) \supseteq R $
for some $\sigma$, can now be restated as: for every $w\in R$
there is a unique $v\in\sigma$ such that $w\in N(v)$. Enforcing
this condition on the maximal independent sets of $G\setminus R$ is the same
as adding cliques on $N(w)$ for all $w\in R$.

Since all critical cells of the matching are of the same dimension 
 ${\tt Ind}(G)$ is a wedge of spheres.
\end{proof}

In \cite{FHHS, FS, HS} it is described, in the context of the
cohomological method, how the generators of cohomology of 
${\tt Ind}(G)$ are related to the ground states of the
supersymmetric model on $G$. When ${\tt Ind}(G)$ is 
isomorphic to a wedge of spheres of the same dimension then 
the number of ground states is the number of spheres. 

The two standard examples of the use of the cohomological method
and the 3-rule are the cycle with $3n$ vertices and the martini lattice.
For a cycle on the $3n$ vertices $0,1,2,\ldots,3n-1$ with edges $(v,v+1)$,
let $R=\{0,3,6,\ldots,3n-3\}$. The graph $G'$ of Corollary~\ref{cor:3}
is a cycle on $2n$ vertices and the ground states are represented by
the independent sets of $G'$ on $n$ vertices. There are two of them.

The martini lattice is not new, but we present it as a first example
of a general procedure to obtain lattices that satisfy the conditions
of Corollary \ref{cor:3}. First we pick a regular bipartite graph, a
hexagonal lattice with closed boundaries.
\begin{figure}
\begin{center}
  \includegraphics*{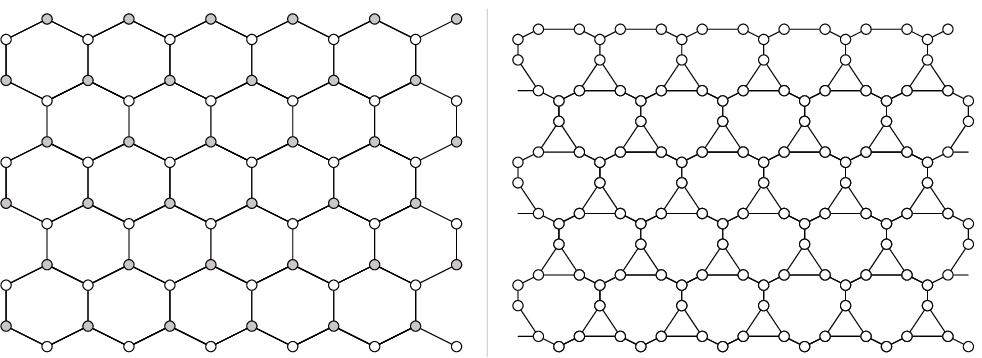}
\begin{caption}{The hexagonal and martini lattices.}\label{fig:hexToMar} 
\end{caption}
\end{center}
\end{figure}
The bipartition is indicated by white and grey vertices in Figure~\ref{fig:hexToMar}.
Transform the grey vertices from Y to $\Delta$ as in Figure~\ref{fig:hexToMar} to get the martini lattice.
The untransformed vertices form the set $R$.
\begin{figure}
\begin{center}
  \includegraphics*{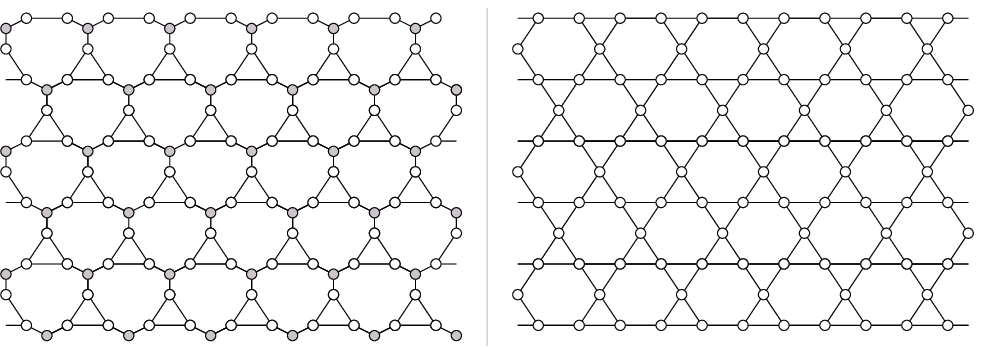}
\begin{caption}{The martini lattice with $R$ in grey. The graph $G'$ is the hexagonal dimer lattice.}\label{fig:marToHexDimer} 
\end{caption}
\end{center}
\end{figure}
Replace the vertices of $R$
with cliques to get $G'$ as in Figure~\ref{fig:marToHexDimer}. By Corollary \ref{cor:3} the
maximal independent sets of $G'$ in Figure~\ref{fig:marToHexDimer} counts the ground
states of the martini lattice. Comparing the hexagonal lattice in Figure~\ref{fig:hexToMar}
with $G'$ in Figure~\ref{fig:marToHexDimer} one notices that $G'$ is the hexagonal
dimer lattice. Ending up with the dimer lattice is a general feature of the procedure
examplified on the hexagonal lattice. Counting maximal independent sets of the
hexagonal dimer lattice is the same as counting perfect matchings on the hexagonal
lattice, and that is solved \cite{Ka,W}.

Now we repeat the same procedure but start off with a 3D-grid with closed boundaries.
\begin{figure}
\begin{center}
  \includegraphics*{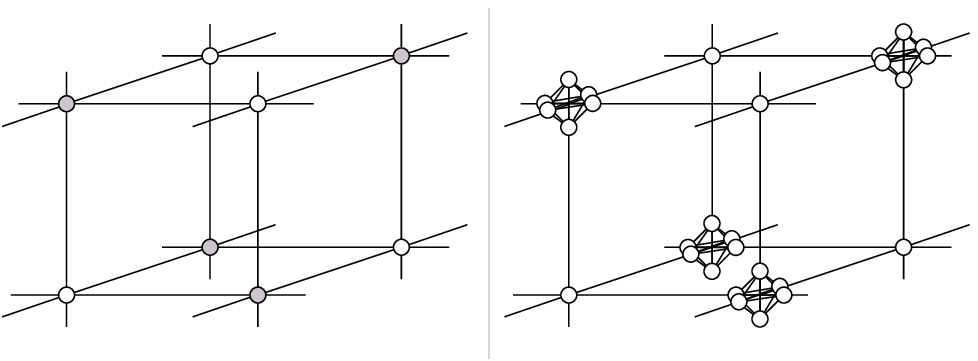}
\begin{caption}{The 3D-grid and the semi-dimer 3D-grid.}\label{fig:3D} 
\end{caption}
\end{center}
\end{figure}
A piece of the 3D-grid with a bipartition into white and grey vertices is drawn in
Figure~\ref{fig:3D}. Replace every grey vertex with a complete graph of the same
order as the vertex degree to get the \emph{semi-dimer 3D-grid} in Figure~\ref{fig:3D}.
By Corollary~\ref{cor:3} the number of ground states for the semi-dimer 3D-grid 
is the same as the number of perfect matchings on the 3D-grid, and there
are good bounds for those as well \cite{C,S}.

For any lattice obtained from this procedure there is a good lower bound
on the number of ground states. It follows from Schrijver's \cite{S} result that
there are at least
\[ \left( \frac{(k-1)^{k-1}}{k^{k-2}} \right) ^ {n/2} \]
perfect matchings on a $k$-regular bipartite graph on $n$ vertices. 
\begin{figure}
\begin{center}
  \includegraphics*{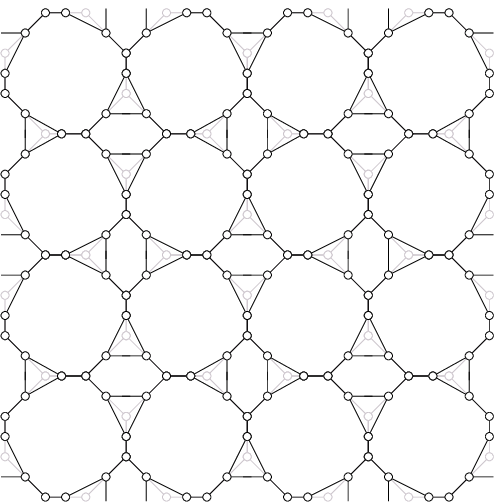}\rule{0.3cm}{0pt}\includegraphics*{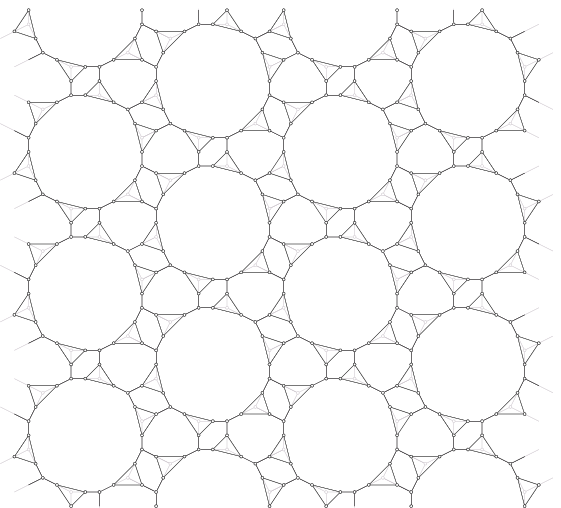}
\begin{caption}{Graphs produced from the edges of the truncated square tiling and the great rhombitrihexagonal tiling \cite{GS}.}\label{fig:hypo} 
\end{caption}
\end{center}
\end{figure}
In Figure~\ref{fig:hypo} two graphs produced from 3-regular bipartite graphs
are illustrated. According to Schrijver's bound there are at least $(4/3)^{n/2}$
perfect matchings.
We can now
construct lattices in arbitrary dimensions with more than $\alpha^n$ ground states
for $\alpha >1$ according to the following construction. For $d>1$ dimensions 
let $n_1,n_2,\ldots,n_d$ be even positive
numbers larger than two. Start with the $2d$--regular 
bipartite graph $T=C_{n_1}\times C_{n_2} \times \cdots \times C_{n_d}$,
a $d$-dimensional grid with closed boundaries. 
Take one of the two parts of $T$ and replace each vertex with $2d$ new vertices
as for the 3D-grid. Then we get a $d$-dimensional lattice with at least
\[ \left( \frac{(2d-1)^{2d-1}}{(2d)^{2d-2}} \right) ^ {\frac{n_1 n_2 \cdots n_d}{2} } \]
ground states.

\subsubsection*{Acknowledgements}
The author thanks Christian Krattenthaler, Philippe Di Francesco,
and the Mathematisches Forschungsinstitut Oberwolfach for organizing a week
on enumerative combinatorics and statistical mechanics; 
Anton Dochtermann and Jakob Jonsson for their comments on the paper;
and the referees for their suggestions, in particular regarding the
connections to the cohomological method and the 3-rule.

\bibliographystyle{amsplain}

\end{document}